\newtheorem{theorem}{Theorem}[section]
\newtheorem{remark}[theorem]{Remark}
\numberwithin{equation}{section}
\newenvironment{proof}[1][Proof]{\textbf{#1.} }{\ \rule{0.5em}{0.5em}}
\begin{document}
\baselineskip=18pt

\pagenumbering{arabic}

\begin{center}
{\Large {\bf Spatial Spread of Epidemic Diseases in Geographical Settings: Seasonal Influenza Epidemics in Puerto Rico}}

\vspace{.3in}

Pierre Magal \textsuperscript{1},
G.F. Webb \textsuperscript{2},
Yixiang Wu \textsuperscript{2}

\vspace{.3in}

\begin{footnotesize}
\textbf{1} Mathematics Department, University of Bordeaux, Bordeaux, France
\\
\textbf{2} Mathematics Department, Vanderbilt University, Nashville, TN\\
\end{footnotesize}

\bigskip

\vspace{0.2in}
\end{center}

\begin{abstract}
Deterministic models are developed for the spatial spread of epidemic diseases in geographical settings. The models are focused on outbreaks that arise from a small number of infected hosts imported into subregions of the geographical settings. The goal is to understand how spatial heterogeneity influences the transmission dynamics of the susceptible and infected populations. The models consist of systems of partial differential equations with diffusion terms describing the spatial spread of the underlying microbial infectious agents. The model is compared with real data from seasonal influenza epidemics in Puerto Rico.
\end{abstract}

\noindent 2000 Mathematics Subject Classification: Primary 92D30; Secondary 92D25, 92C60.
\medskip

\noindent Keywords: epidemic, influenza, spatial, deterministic.

\section{Introduction}
Epidemic outbreaks evolve in geographical regions with considerable variability in spatial locations. This spatial variability is important in understanding the impact of public health policies and interventions in controlling these epidemics. A major difficulty in developing models to describe spatial variability in epidemics is accounting for the movement of people in spatial contexts. Many efforts to develop realistic descriptions of epidemics in geographical settings have used individual based models (IBM). These models employ large-scale societal data of human movement and interaction to simulate human behavior  at spatial and temporal levels based on probabilistic assumptions.  These models require intensive informational input, as well as intensive computational output. Our objective is to provide an alternative approach for modeling spatial epidemics based on deterministic  models formulated as partial differential equations in spatial domains.

Our specific focus is upon seasonal influenza outbreaks in geographical regions.
Seasonal influenza epidemics recur annually during the cold half of the year in each hemisphere. Each annual flu season is normally associated with a major influenza-virus subtype. The associated subtype changes each year, due to development of immunological resistance to a previous year's strain through exposure and vaccinations, and mutational changes in previously dormant viral strains. The beginning activity in each season varies by location, and evolves characteristically in the larger spatial domain. The exact mechanism behind the seasonal nature of influenza outbreaks is unknown.

This paper is organized as follows: In Section 2 we formulate a general deterministic model for the evolution of an epidemic outbreak in a spatial domain. In Section 3 we specify the model to seasonal influenza epidemics in Puerto Rico, and simulate these epidemics in 2015-2016 and 2016-2017. 
In Section 4 we discuss our results and compare them to IBM formulations of spatial epidemics. 
In the Appendix we state and proof theorems for our deterministic model of a spatial epidemic.

\section{A General Deterministic Spatial Epidemic Model}  
Partial differential equations with diffusion terms have been proposed  by many authors to describe the movement of people in various applications, including \cite{fitzgibbon2008simple, ruan2007spatial, rass2003spatial, webb1981reaction, capasso1978global}. In most applications, however, diffusion does not provide a valid description of the way people move in societal settings. Diffusion provides only an averaging process that cannot account for the extreme spatial and temporal heterogeneity in human movement. We argue, alternatively, that the spatial movement of the micro-organisms causing the epidemic, rather than the spatial movement of humans, is an effective way to account for epidemic spatial development. The movement of the infectious agent can be viewed indirectly, as the movement of infectious individuals, described with diffusion processes. It is clear that the contributions of local-distance and long-distance transmission are both involved in the spatial evolution of epidemics. In \cite{gog2014}, however,  it is argued that for the 2009 H1N1 influenza epidemic, local transmission was of greater importance than distant transmissions, as outbreaks in proximate communities resulted in  successful infection chains, whereas, distant transmissions died out after a small number of generations. The underlying assumption is that most infections occur close to home-base of infectious individuals, which spread to nearby susceptible individuals.

Our model has the following formulation:
Suppose that $\Omega\subset \mathbb{R}^2$ is a bounded domain. Let $S(t,{\bf x})$ and $I(t,{\bf x})$ be the spatial densities at location ${\bf x} \in \Omega$ and at  time $t$ of susceptible and infected individuals, respectively. 
\begin{eqnarray}
\label{PRmodel}
\frac{\partial}{\partial t} S(t,{\bf x})  &=&  
  - \frac{ \tau({\bf x}) \,  I(t,{\bf x})^p}{1+ \kappa ({\bf x}) \,  I(t,{\bf x})^q} S(t,{\bf x}) \label{Eq-1}, \, \hspace{5cm} {\bf x} \in \Omega, \, t >0   \\  
\frac{\partial}{\partial t} I(t,{\bf x})  &=& \alpha({\bf x}) \Delta I(t,{\bf x})
  + \frac{ \tau ({\bf x}) \,  I(t,{\bf x})^p}{1+ \kappa ({\bf x}) \,  I(t,{\bf x})^q}  S(t,{\bf x}) - \lambda({\bf x}) I(t,{\bf x}), \, \, {\bf x} \in \Omega, \, t >0   \label{Eq-2} \\
\frac{\partial}{\partial \eta} I(t,{\bf x}) &=& 0, \hspace{9.0cm} {\bf x} \in \partial \Omega, \, t >0   \label{Eq-3} \\
S(0,{\bf x}) &=& S_0({\bf x}),\, \, I(0,{\bf x}) = I_0({\bf x}), \,\,\  \hspace{5.5cm}{\bf x} \in  \Omega  \label{Eq-4}
\end{eqnarray}
where $\alpha({\bf x})$ is the diffusion parameter for infected individuals, $\tau({\bf x}) , \kappa({\bf x}) , p$ and $q$ are  transmission parameters, and $\lambda({\bf x})$ is the removal rate of infected individuals. The transmission rate has nonlinear incidence form (\cite{liu1987}, \cite{hethcote1991}, \cite{ruan2003}), where $\tau I(t,{\bf x})^p$ measures the force of infectiousness and $1 / (1 + \kappa I(t,{\bf x})^q )$  measures reduced infectiousness resulting from behavioral change as the number of infected individuals increases. The parameter  $\alpha({\bf x})$, $\kappa({\bf x}), \tau$({\bf x}), and $\lambda({\bf x})$  are positive continuous functions on $\overline\Omega$, and the initial data $S_0$ and $I_0$ are nonnegative continuous functions on $\overline\Omega$. 


\section{ Seasonal Influenza Epidemics in Puerto Rico}

In this section we simulate  the 2015-2016 and 2016-2017 seasonal influenza epidemics in Puerto Rico. The  island of Puerto Rico consists of 76 municipalities, with total population of almost 3,500,000, in a geographical region of approximately 170 km  by 60 km. The four major municipalities with largest population are the eastern San Juan (population 2,350,000),  the southern Ponce (population 262, 000), the western Arecibo (population 193,000), and the western Mayag\H{u}ez (population 89,000 (see Figure \ref{Fig1a-1b}).

\begin{figure}
\begin{center}
\subfigure
{\includegraphics[width=6.5in,height=2in]{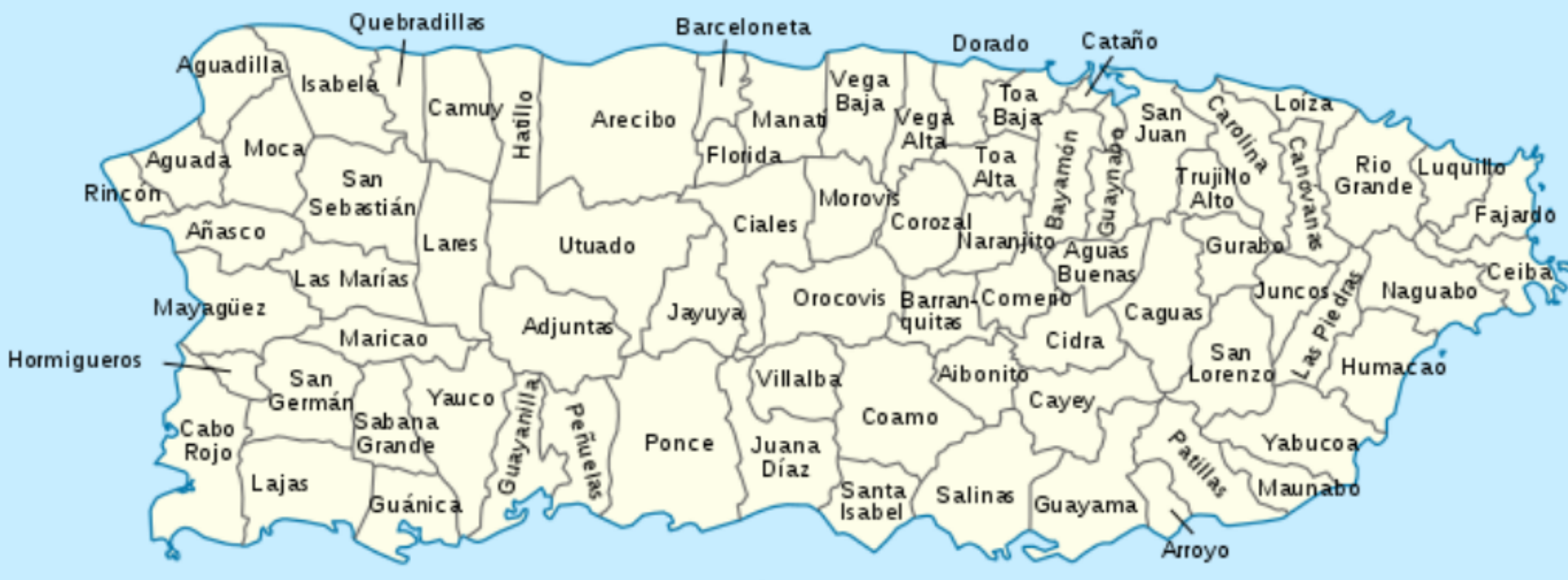}}
\subfigure
{\includegraphics[width=6.5in,height=2in]{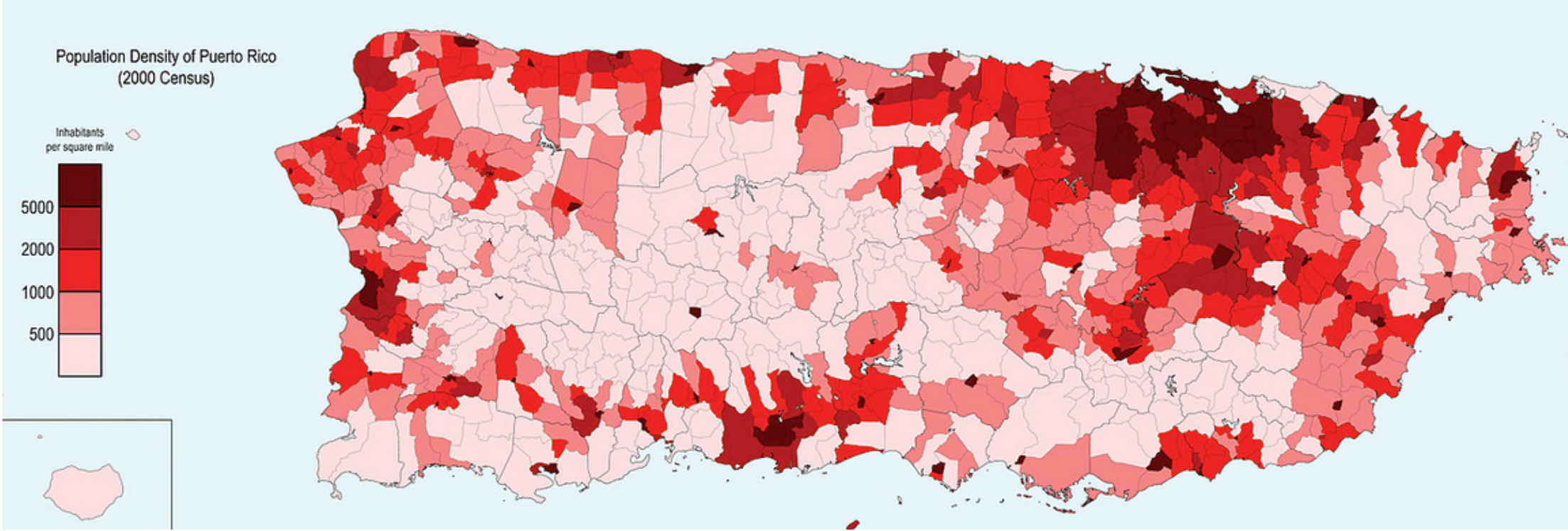}}
\caption{Top. The 76 municipalities in Puerto Rico (wikipedia.org).
Bottom. The population density of Puerto Rico (wikipedia.org)}
\label{Fig1a-1b}
\end{center}
\end{figure}

The initial susceptible population $S_0({\bf x})$ is obtained as follows: The boundary data of Puerto Rico are latitudes and longitudes obtained from Mathematica using CountryData["PuertoRico", "SchematicPolygon"], which forms a polygon with 71 points. The boundary data is used to generate the mesh. The population density data is obtained from 
http://sedac.ciesin.columbia.edu/data/collection/gpw-v4/maps/services, which gives the population in each 1 km $\times$1 km unit square area on earth. The latitude bounds are $\{17.9, 18.5\}$ and the longitude bounds are $\{-67.3, -65.3\}$, with 1216 mesh nodes. The population data is used to calculate the population density for the initial susceptible population $S_0({\bf x})$ based on 1216 mesh nodes (see Figure \ref{Fig2}).

\begin{figure}
\begin{center}
{\includegraphics[width=6.5in,height=3.0in]{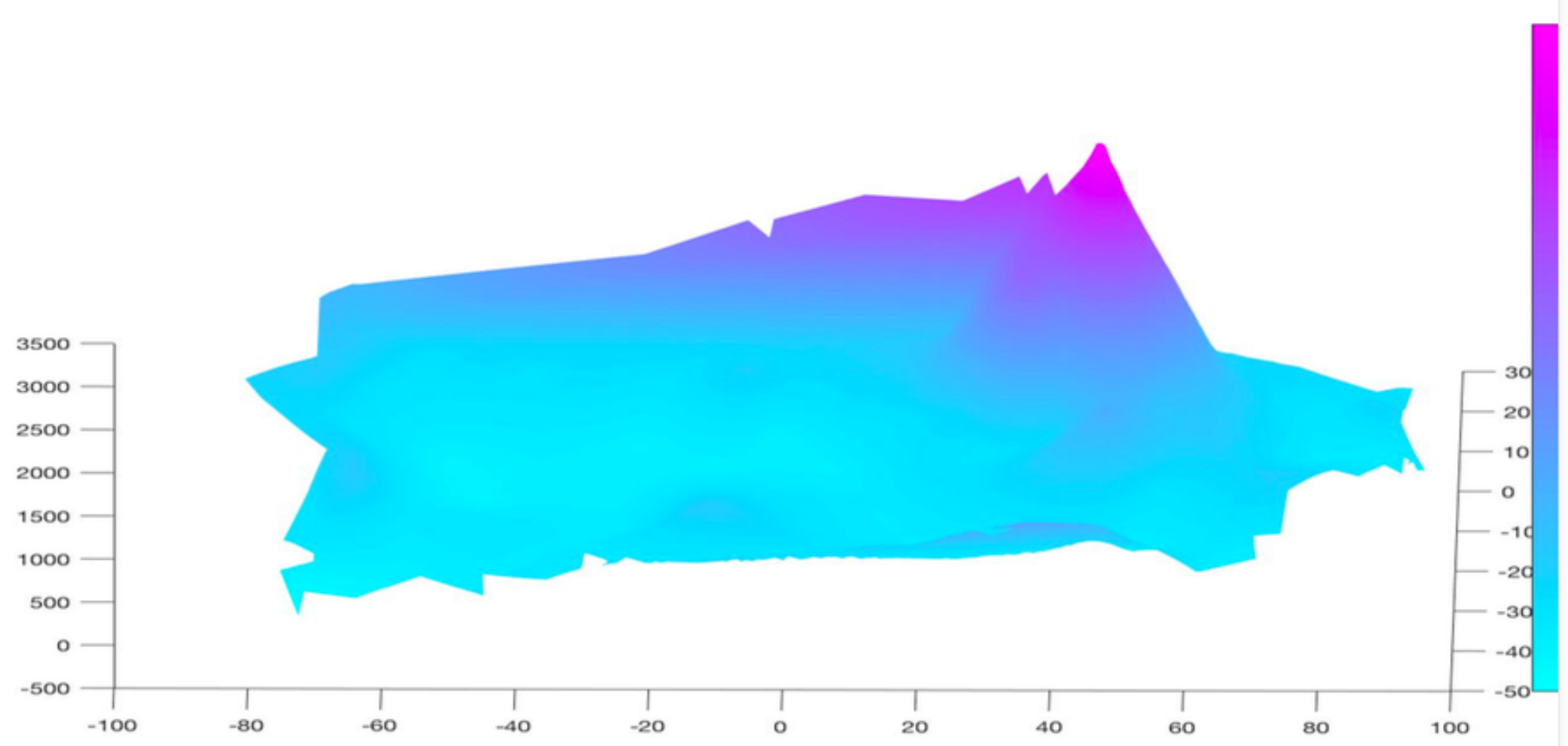}}
\caption{Top. The population density of the initial susceptible population 
$S_0({\bf x})$. }
\label{Fig2}
\end{center}
\end{figure}

\vspace{.8in}

\subsection{Parameterization of the Model for Puerto Rico}

The parameterization of any model of a seasonal influenza epidemic presents enormous challenges, because of the incompleteness of data. In the United States, typical epidemic data consists of Morbidity and Mortality Weekly Reports (MMWR) published by the Centers for Disease Control (CDC). For seasonal influenza, this data is very incomplete, and records only a small fraction of total cases. A recent analysis argued that unreported cases and attack rates (the fractions of the total susceptible populations that become infected over the course of an  epidemic) are largely underestimated \cite{Cauchemez2012}.

In an earlier study we developed a formalism for estimating the ratio of reported to unreported cases for the seasonal influenza  epidemics in Puerto Rico in 2015-2106 and 2016-9017 \cite{Magal2017}. The estimates in \cite{Magal2017} claimed attack rates of approximately 40\% to 50\%. These attack rates are higher than usually claimed for seasonal influenza epidemics. Here we have developed our parameters to reflect attack rates of approximated 30\% for both epidemics, based on a comparison of the graphs of the reported cases from CDC data and the graphs of the total cases (both reported and unreported) obtained from our model simulations. The objective was to match the duration of the epidemics, the turning points, and the character of their graphs in the reported case data and the model simulations.

Based on these considerations we estimate the parameters for Puerto Rico as follows:

\begin{enumerate} 
\item
Time units  are weeks. For the 2015-2016 epidemic, the initial time $t=0$ corresponds to week 44 of 2015. The 2015-2016 epidemic lasts approximately 30 weeks (until week 23 of 2016). For the 2016-2017 epidemic, the initial time $t=0$ corresponds to week 37 of 2016. The 2016-2017 epidemic lasts approximately 35 weeks (until week 21 of 2017) \cite{WebData}
\item
Spatial units are kilometers. The spatial region $\Omega$ is as in Figure \ref{Fig2a-2b}.
\item
The average length of the infectious period of infected people is about 2 days or 1/3.5 weeks: 
$\lambda({\bf x}) = 3.5.$ \cite{WebWiki}
\item
The transmission parameters for the nonlinear incidence form are $\tau({\bf x}) = 0.02, \, \kappa({\bf x}) = 0.05$, $p =1.0$, and $q = 1.0$. 
\item
The diffusion parameter of infected individuals is $\alpha = 4.0$, which corresponds indirectly to the geographical spread of the virus.
\end{enumerate}

In Figure \ref{Data} we graph the reported cases of influenza in Puerto Rico in 2015-2016 and 2016-2017, provided by Departamento de Salud, Gobierno de Puerto Rico, Sistema de Vigilancia de Influenza de Puerto Rico (\cite{WebData}). The seasonal influenza epidemics in Puerto Rico illustrate the importance of epidemic spatial heterogeneity. In 2015-2016 the graph of reported cases over 30 weeks showed 2 peaks: a high peak in week 10 and a low peak in week 25 (Figure \ref{Data}). In 2016-2017 the graph shows a low peak in week 7 and a high peak in week 12. Most models of disease transmission without spatial heterogeneity show only one peak (\cite{Magal2017}).  However, multiple peaks are possible if the spatial environment is heterogeneous, and we claim that spatial variation can explain the positions of the peaks.
\begin{figure}
\begin{center}
{\includegraphics[width=4.5in,height=3.5in]{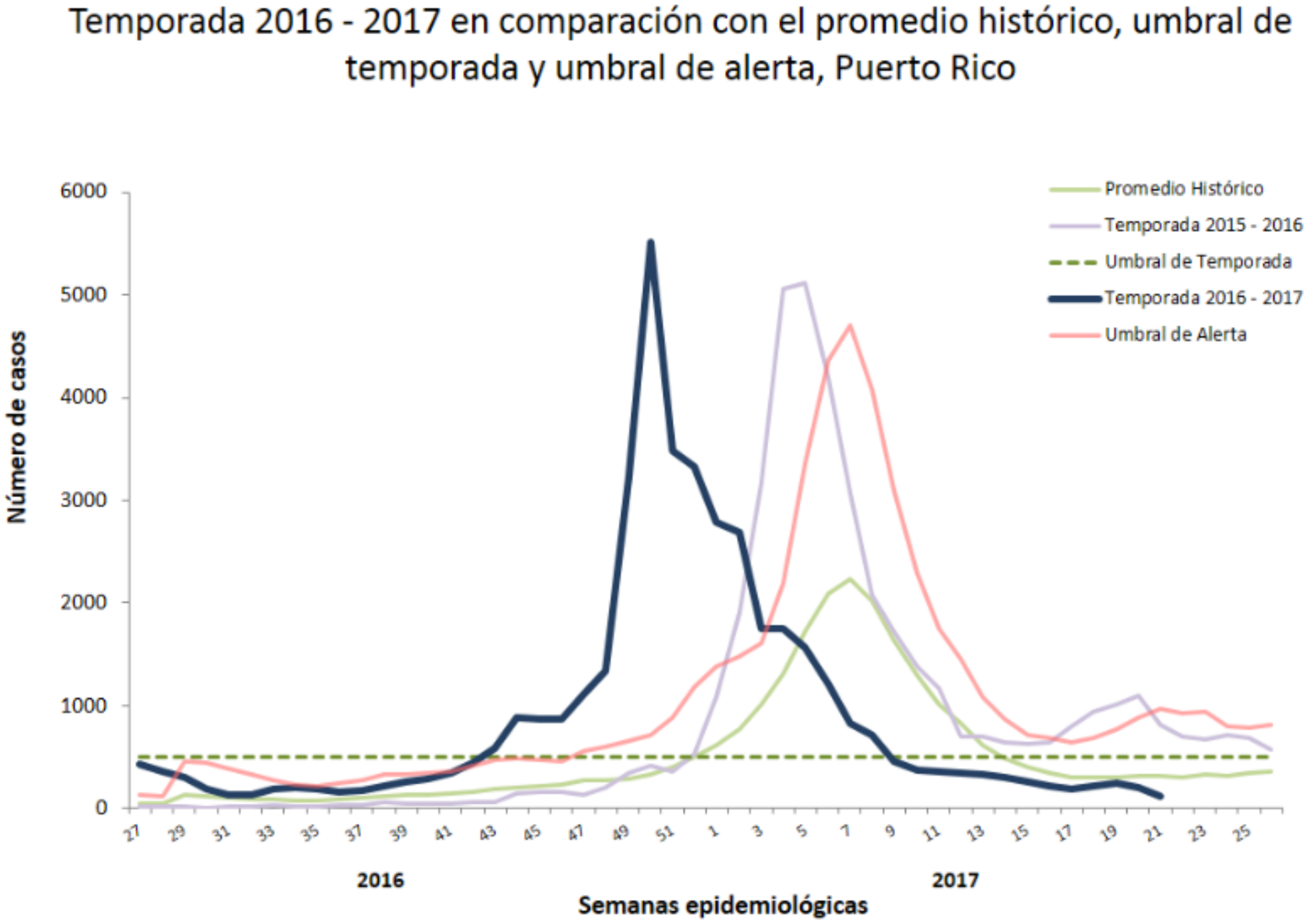}}
\end{center}
\caption{Reported  cases of seasonal influenza Puerto Rico in 2015-2016 (yellow graph) and 2016-2917 (black graph).}
\label{Data}
\end{figure}

\subsection{Simulation of the model for the 2015-2016 epidemic}
The model simulates the seasonal influenza epidemic for \underline{all} infected cases, not only reported cases. The simulation of all cases reflects the simulation of cases reported by Departamento de Salud, Puerto Rico (see Figure \ref{Data}). Estimates of the ratio of unreported to reported cases are difficult to obtain. For the US H1N1 epidemic in 2009, the CDC estimated this ratio as 79 -1 ({\it Health Day News}, October 29, 2009). The ratio for the simulation of the 2015-2016 epidemic is approximately 25-1. The estimated total infected cases for the 2015-2016 influenza epidemic are graphed in Figure \ref{total2015}, and our simulation Figure \ref{total2015} captures the feature of two peaks in the corresponding graph of reported case data  in Figure \ref{Data}.

\begin{figure}
\begin{center}
{\includegraphics[width=4.2in,height=2.1in]{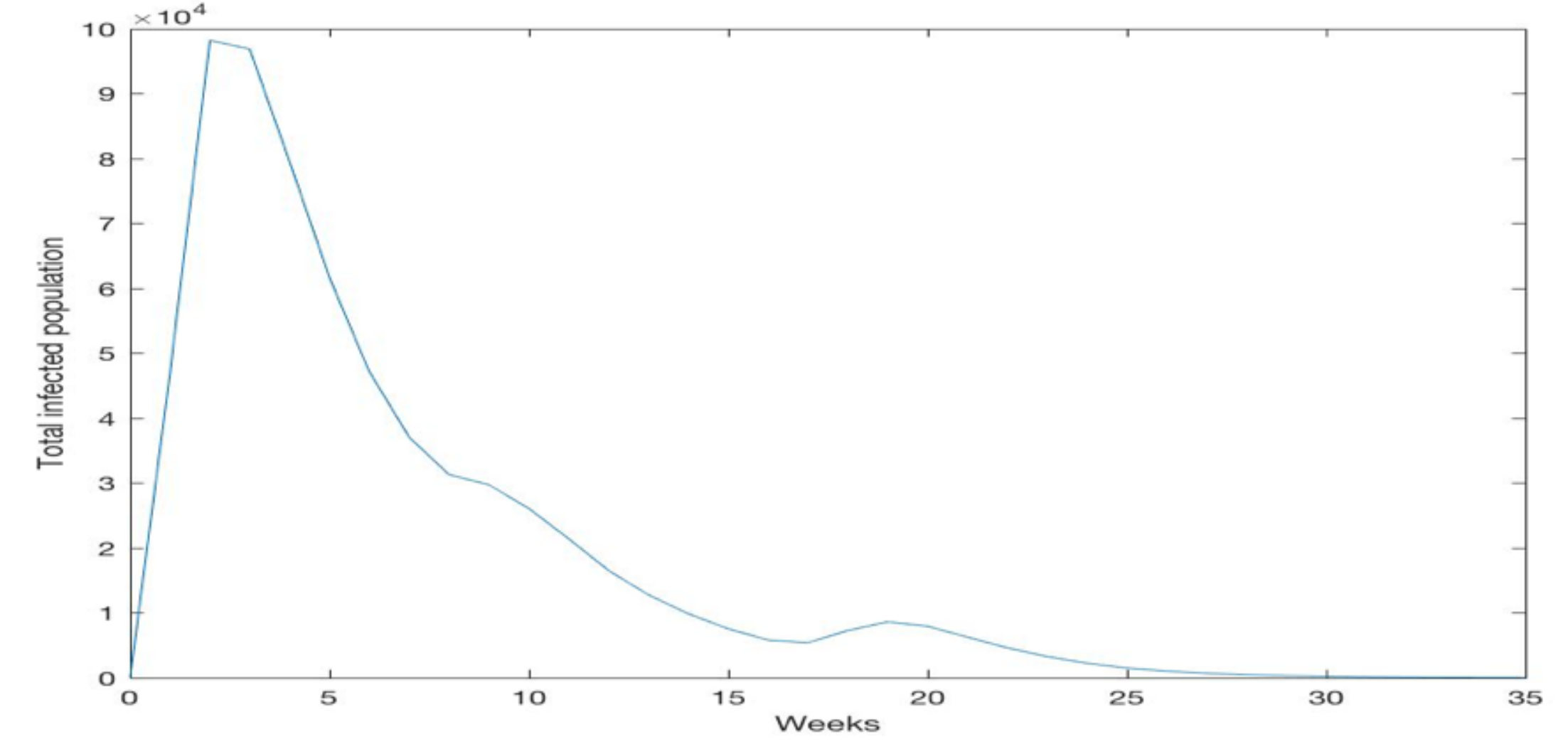}}
\end{center}
\caption{Model simulation of the total infected cases (including unreported cases) of the seasonal influenza 2015-2016 epidemic in Puerto Rico.}\label{total2015}
\end{figure}

We graph the density of the infected population at different times in Figure \ref{ModelDensities2015},
Figure \ref{2015-16-weeks2and6}, and Figure \ref{2015-16-weeks10and22}. From the location of the initial outbreak In San Juan, the epidemic spreads west toward  Arecibo, then south toward Ponce, and then west toward Mayag\H{u}ez. The two peaks in the total case count arise from the spatial evolution of the epidemic, first to the regions of San Juan (population 2,350,000) and Arecibo (population 193,000), and then to the regions of Ponce (population 262,000) and Mayag\H{u}ez (population 89,000). In Figure \ref{four2015} we graph the total infected cases in the four major municipalities of Puerto Rico). 

\begin{figure}
\begin{center}
{\includegraphics[width=6.5in,height=3.0in]{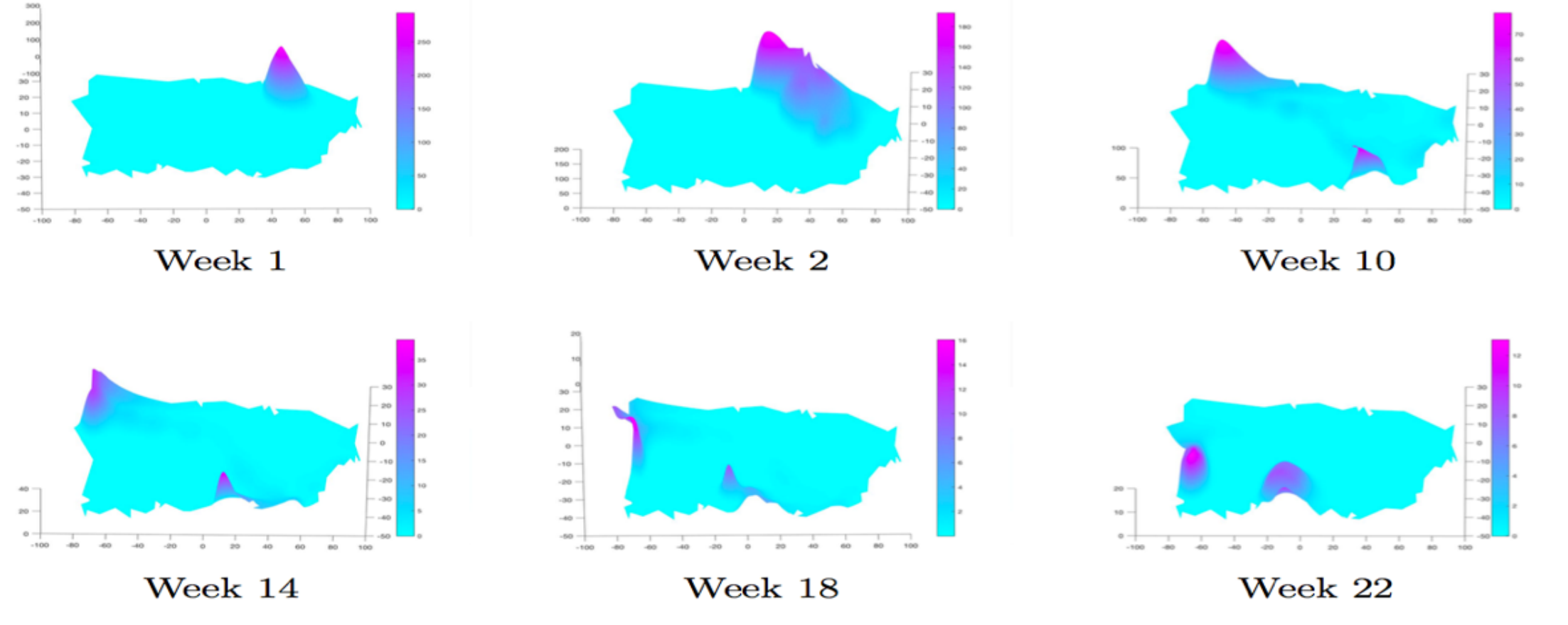}}
\caption{Model simulation for the spatial spread of the 2015-2016 seasonal influenza epidemic in Puerto Rico.}
\label{ModelDensities2015}
\end{center}
\end{figure}

\begin{figure}
\begin{center}
{\includegraphics[width=6.5in,height=3.0in]{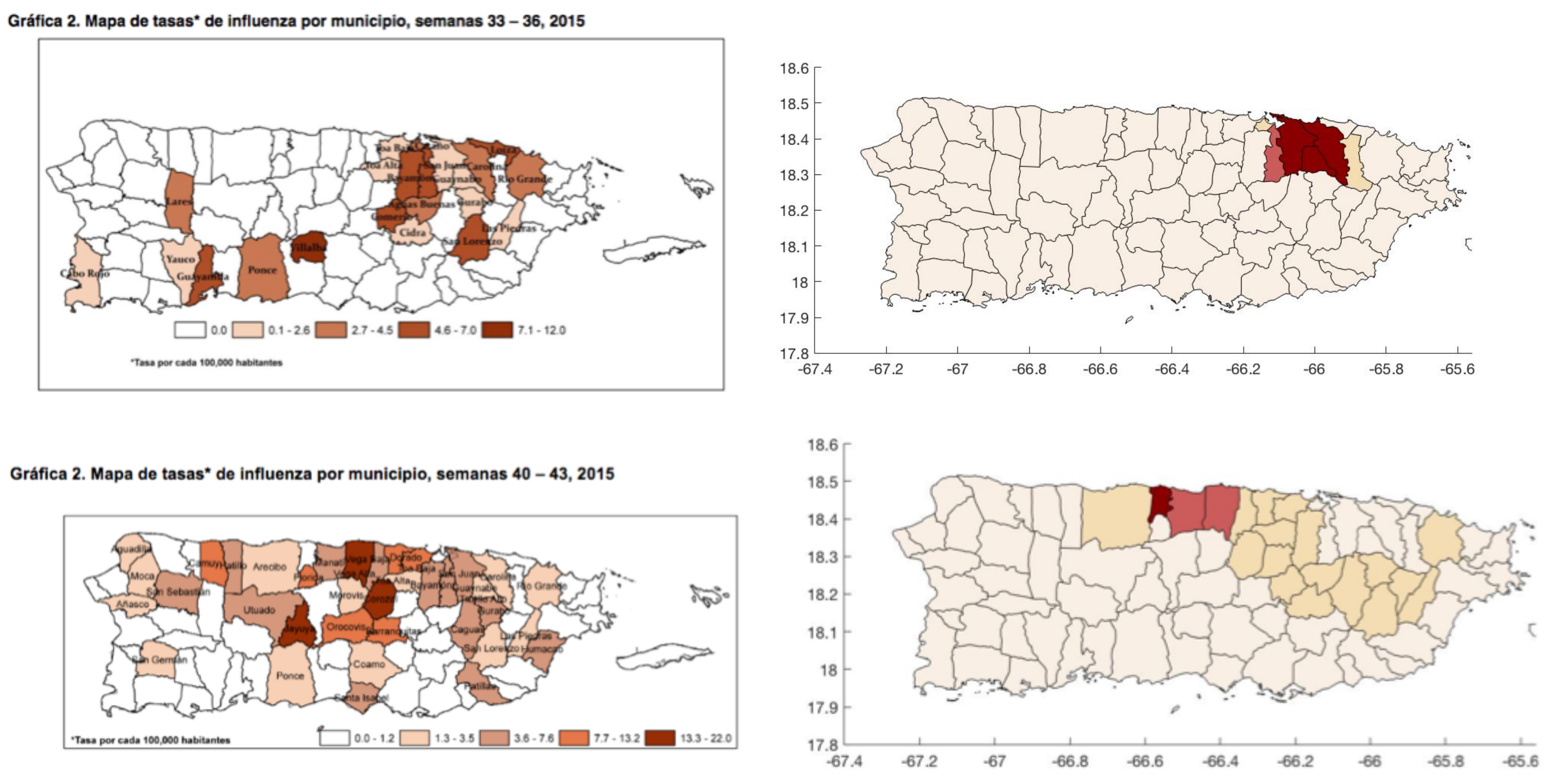}}
\caption{Infected population densities in the 2015-2016 seasonal influenza epidemic in Puerto Rico in all municipalities for week 2 (top) and week 6 (bottom) for data from Departamento de Salud (left) and the model simulation (right).}
\label{2015-16-weeks2and6}
\end{center}
\end{figure}

\begin{figure}
\begin{center}
{\includegraphics[width=6.5in,height=3.0in]{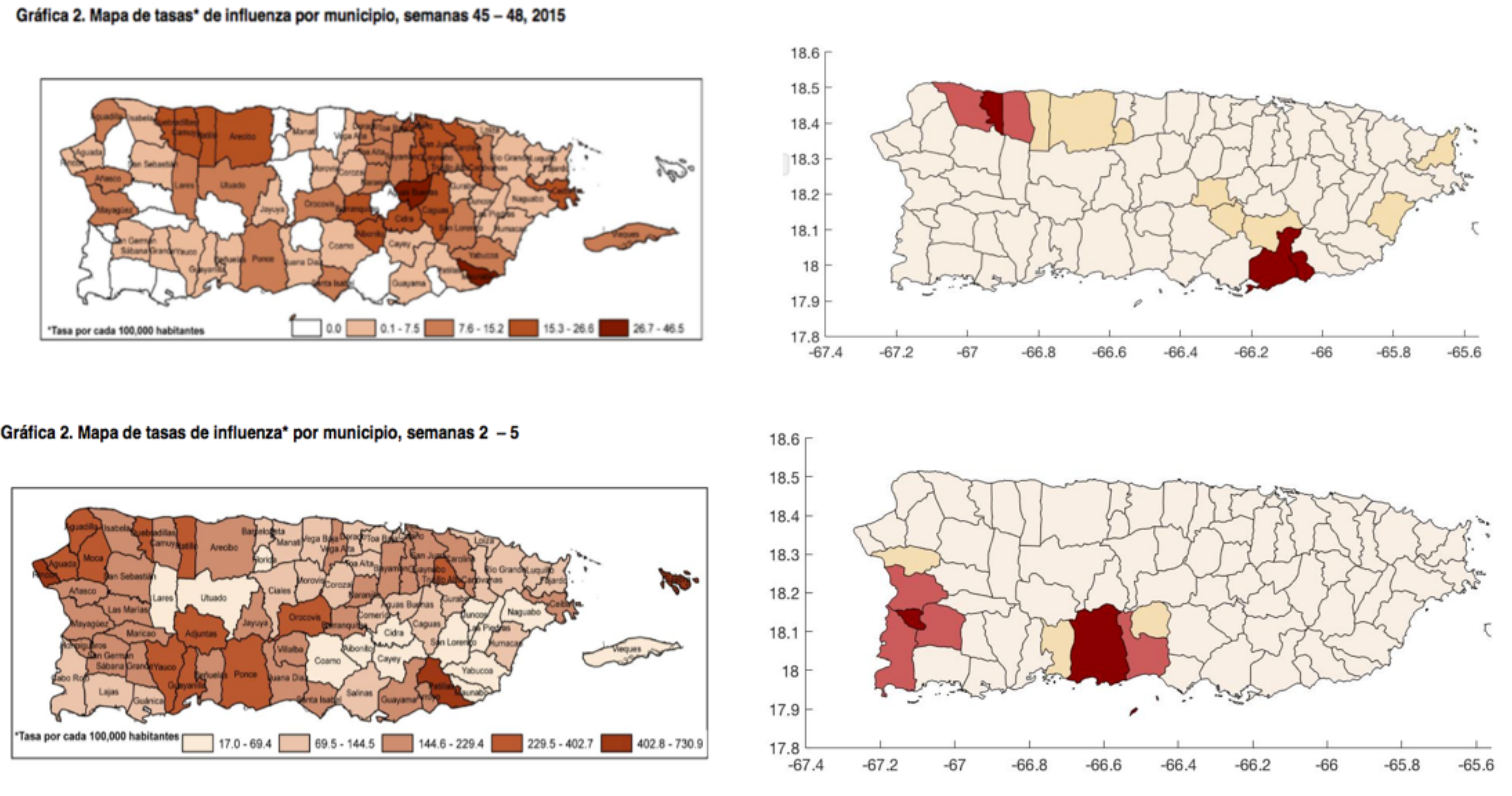}}
\caption{Infected population densities in the 2015-2016 seasonal influenza epidemic in Puerto Rico in all municipalities for week 10 (top) and week 22 (bottom) for data from Departamento de Salud (left) and the model simulation (right).}
\label{2015-16-weeks10and22}
\end{center}
\end{figure}

\begin{figure}
\centering \includegraphics[width=6.5in,height=3.3in]{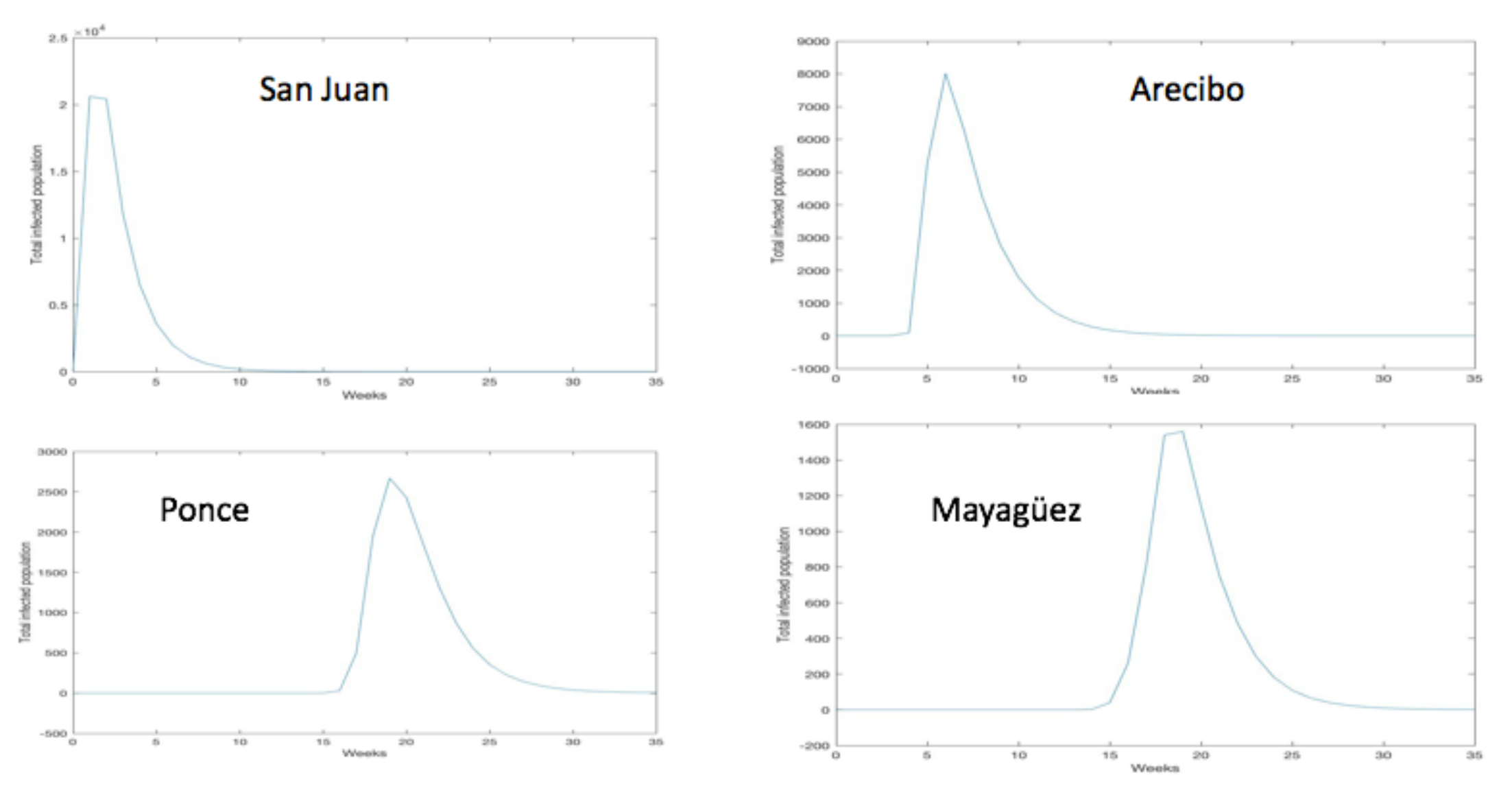}
\caption{The total infected cases in four major municipalities of Puerto Rico during the 2015-2016 influenza epidemic.}
\label{four2015}
\end{figure}

\subsection{Simulation of the model for the 2016-2017 epidemic}
A change in the outbreak location, with all other model inputs the same, simulates the data for the 2016-2017 influenza epidemic. We graph the total infected cases from the model simulation in Figure \ref{total2017}, and the graph (scaled) agrees with the graph for reported cases in Figure \ref{Data} for the 2016-2017 epidemic. We graph the density of the infected population at different time points  in Figure \ref{ModelDensities2017}. From the initial outbreak in Mayag\H{u}ez the epidemic spreads north and east toward Arecibo, then east toward San Juan, and south toward  Ponce. In Figure \ref{four2017} we graph the total infected cases in the four major municipalities of Puerto Rico.

\begin{figure}
\begin{center}
{\includegraphics[width=4.2in,height=2.1in]{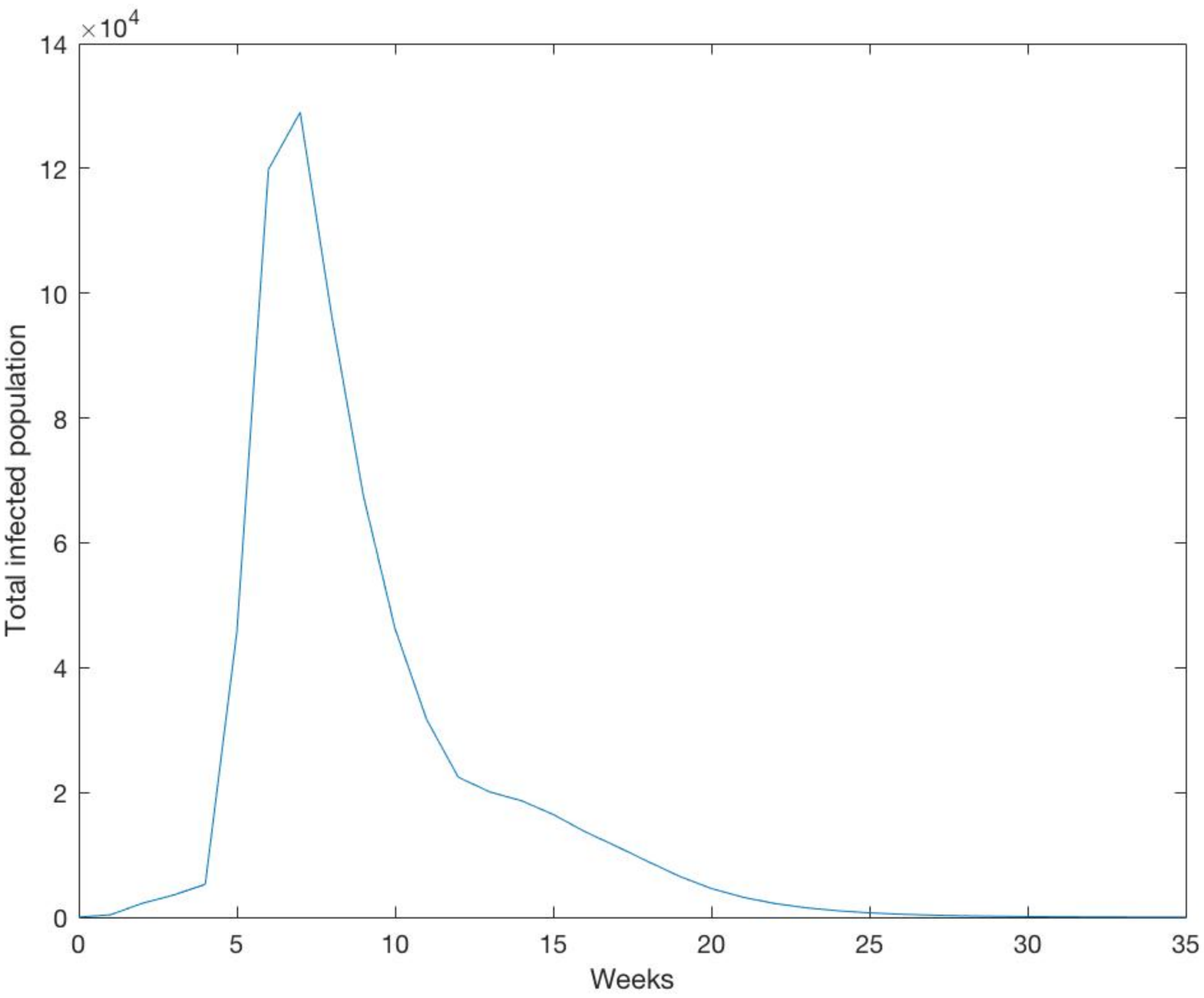}}
\end{center}
\caption{Model simulation of the total infected cases (including unreported cases) of the seasonal influenza 2016-2017 epidemic in Puerto Rico.}\label{total2017}
\end{figure}

\begin{figure}
\begin{center}
{\includegraphics[width=6.5in,height=3.0in]{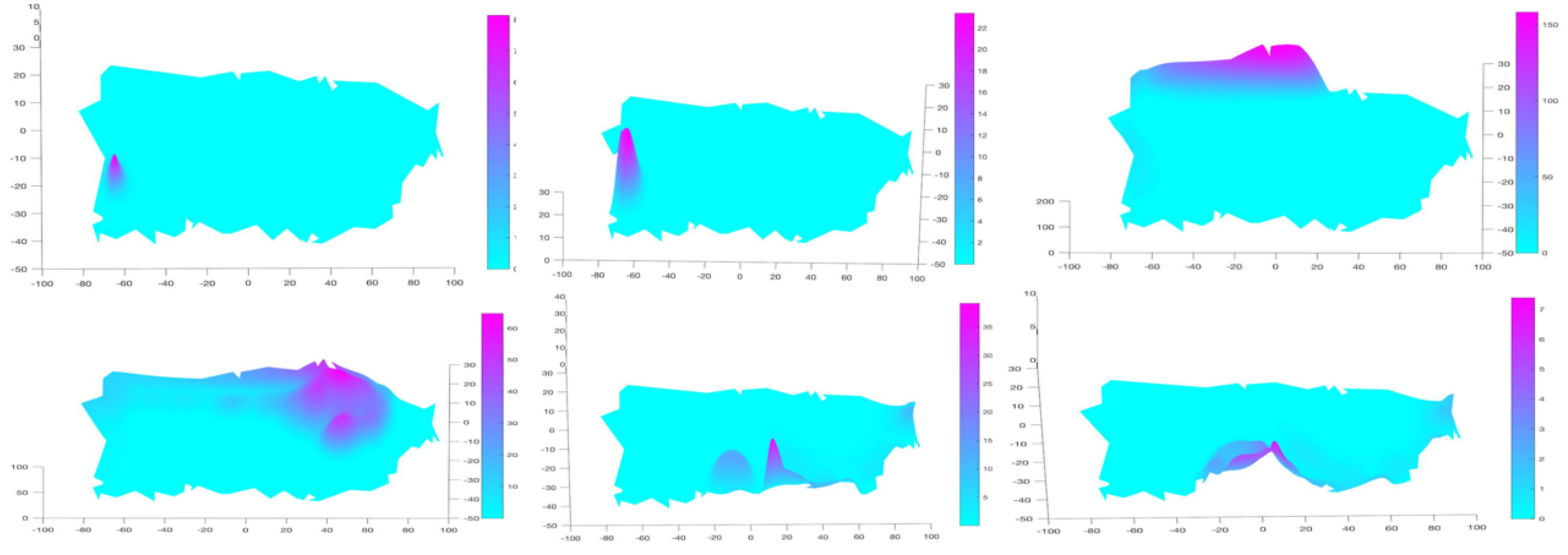}}
\caption{Model simulation for the spatial spread of the 2016-2017 seasonal influenza epidemic in Puerto Rico.}
\label{ModelDensities2017}
\end{center}
\end{figure}

\begin{figure}
\begin{center}
{\includegraphics[width=6.5in,height=3.0in]{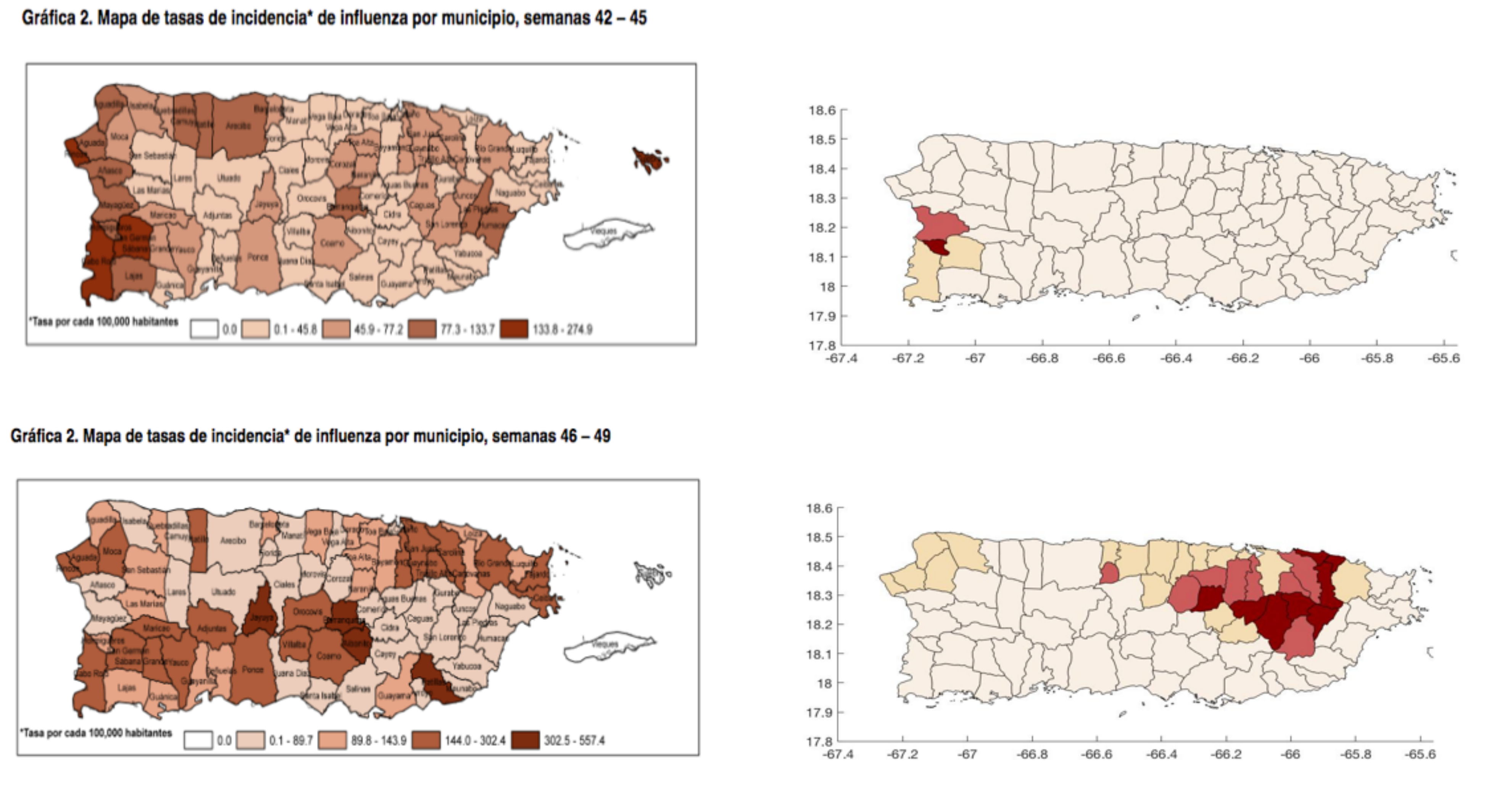}}
\caption{Infected population densities in the 2015-2016 seasonal influenza epidemic in Puerto Rico in all municipalities for week 4 (top) and week 10 (bottom) for data from Departamento de Salud (left) and the model simulation (right).}
\label{2016-17-weeks4and10}
\end{center}
\end{figure}

\begin{figure}
\begin{center}
{\includegraphics[width=6.5in,height=3.0in]{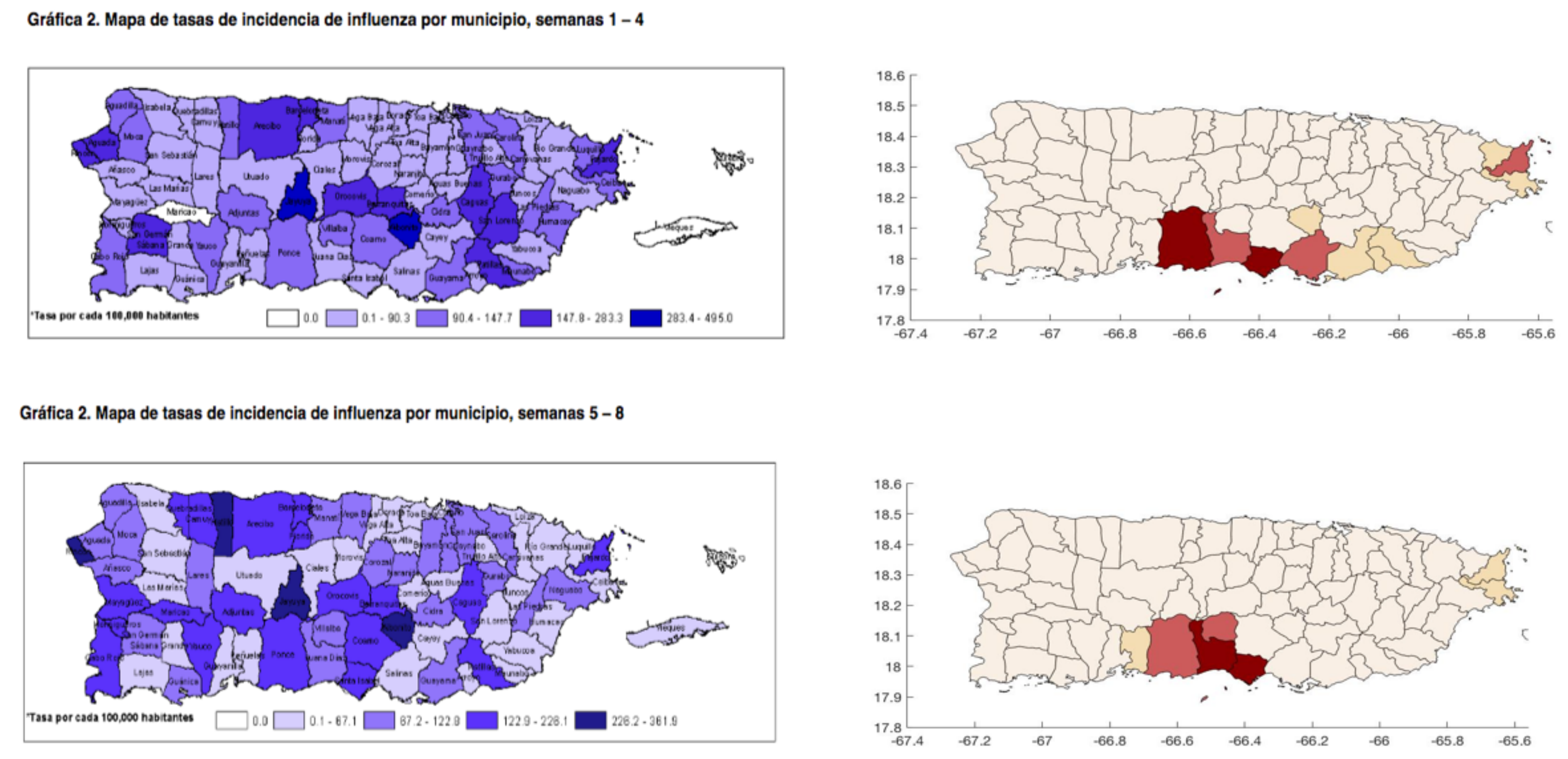}}
\caption{Infected population densities in the 2015-2016 seasonal influenza epidemic in Puerto Rico in all municipalities for week 18 (top) and week 22 (bottom) for data from Departamento de Salud (left) and the model simulation (right).}
\label{2016-17-weeks18and22}
\end{center}
\end{figure}

\begin{figure}
\centering \includegraphics[width=6.5in,height=3.3in]{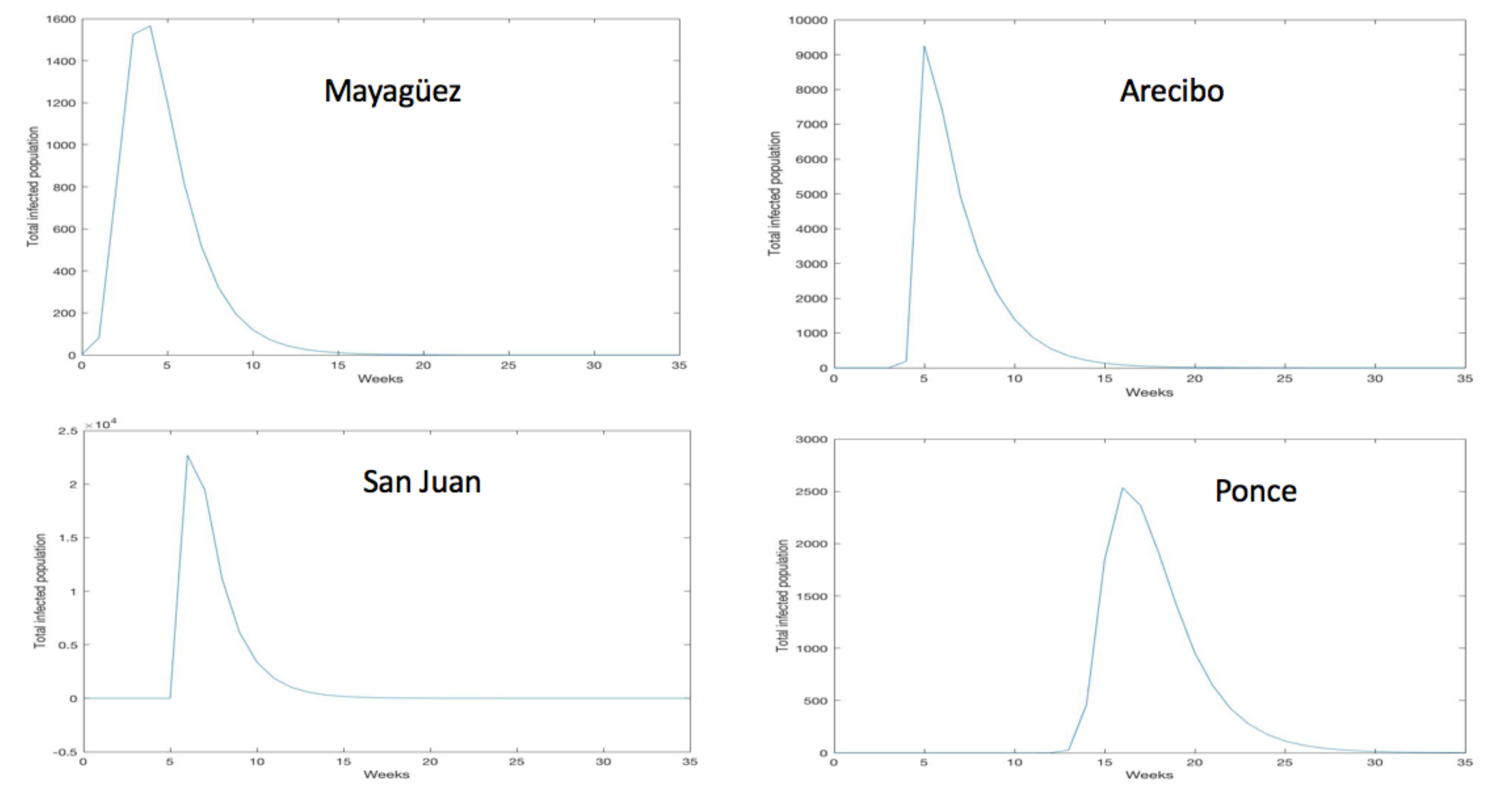}
\caption{The total infected cases in four major municipalities of Puerto Rico during the 2016-2017 influenza epidemic.}
\label{four2017}
\end{figure}

We introduce the local basic reproduction number
  $$
  R_0(x,y) = \frac{\tau S_0(x,y) I_0(x,y)^{p-1}}{\lambda(x,y) (1+\kappa I_0(x,y))^q}.
  $$
  The local evolution of the epidemic at a given outbreak location $(x,y)$ is governed by the local basic reproduction number.  If $R_0(x,y) < 1$, the epidemic initially subsides, then grows. If $R_0(x,y) > 1$, the epidemic initially expands. 

\section{Conclusions and discussion}
The model indicates that influenza in Puerto Rico rises each season from initial small outbreak locations, and spreads through most of the island, dependent on geographic population variation. The final size of the epidemic at the end of the season depends on the initial outbreak locations, the geographic heterogeneity of the population, and the model parameters.

The model suggests a reason for the seasonality of seasonal influenza epidemics. In a general region, the epidemic lasts approximately 30 weeks, but in subregions the epidemic last approximately 6 weeks (although sometimes re-occurring). The model indicates that the epidemic duration depends strongly on the depletion of the susceptible population to a level that no longer sustains transmission. This depletion happens rapidly in local regions, while the general level of the epidemic occurs much longer in larger regions. Thus, geographic variation is important in understanding the seasonality of seasonal influenza epidemics.

The model indicates that the most effective controls are to monitor the importation of infected people into local regions, and to concentrate public health interventions in regions of high population density (where the local basic reproduction number $R_0(x,y)$ is highest), especially at the beginning of the season.

Future work involves the use of disease age to track infectiousness levels of infected individuals, through the incubation period, and the rise and fall of the infectious period. Particular emphasis will be given to pre-symptomatic infectiousness periods. The model will be extended to include public policy measures such as quarantine, vaccination, and school closings. Future work will extend the model to study geographic variation in other diseases, including vector-borne diseases such as zika, dengue, and malaria.

\begin{appendices}
\begin{theorem}\label{ode} Let $\tau, \kappa, \lambda, p, q > 0$ with $1 \leq p \leq q+1$, and let $S_0, I_0 >0$. 
There exists a unique solution  $S(t) \geq 0$, $I(t) \geq 0$,  satisfying $S(0) = S_0$,  $I(0) = I_0$ and
\begin{eqnarray}
S^{\prime}(t) &=&  
  - \frac{ \tau \, I(t)^p}{1+ \kappa \,  I(t)^q} S(t), \hspace{1.8in}  (ODE.1) \nonumber \\  
I^{\prime}(t) &=&  \frac{ \tau \, I(t)^p}{1+ \kappa \,  I(t)^q} S(t)- \lambda I(t). \nonumber  \hspace{1.38in} (ODE.2)   \nonumber 
\end{eqnarray}
Let $R_0 = \tau I_0^{p-1} S_0/(\lambda (1 + \kappa I_0^q)$. If $R_0 < 1$, then $S(t)$ decreases to a limiting value $S_{\infty} > 0$ and $I(t)$ decreases to $0$. If $R_0 > 1$, then $S(t)$ decreases to a limiting value $S_{\infty} > 0$ and $I(t)$ first increases, then decreases to $0$.
\end{theorem}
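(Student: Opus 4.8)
The plan is to obtain existence, uniqueness, positivity and global existence from standard ODE theory plus two elementary a priori estimates, and then to read off the asymptotics from the sign of a single auxiliary scalar function. Since $p\ge 1$, the map $I\mapsto I^{p}$ is $C^{1}$ on $[0,\infty)$, so the vector field $(S,I)\mapsto(-f(I)S,\,f(I)S-\lambda I)$, with $f(I):=\tau I^{p}/(1+\kappa I^{q})$, is locally Lipschitz on $\mathbb{R}\times[0,\infty)$, and Picard--Lindel\"of gives a unique maximal solution. Integrating (ODE.1) gives $S(t)=S_{0}\exp\bigl(-\int_{0}^{t}f(I)\bigr)\in(0,S_{0}]$, so $S$ is strictly decreasing and stays positive; since the incidence term in (ODE.2) is nonnegative, $I'\ge-\lambda I$ yields $I(t)\ge I_{0}e^{-\lambda t}>0$. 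Adding the equations, $(S+I)'=-\lambda I\le0$, hence $0<S(t)+I(t)\le S_{0}+I_{0}$; in particular $I\le S_{0}+I_{0}$ and $\lambda\int_{0}^{\infty}I(s)\,ds\le S_{0}+I_{0}<\infty$. The solution stays in the compact set $[0,S_{0}]\times[0,S_{0}+I_{0}]$, so it is global.

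Next I would show that $S$ decreases to a \emph{positive} limit. From $S(t)=S_{0}\exp\bigl(-\int_{0}^{t}f(I)\bigr)$ it suffices that $\int_{0}^{\infty}f(I)<\infty$, and this is exactly where the hypothesis $1\le p\le q+1$ is used: splitting into $I\le1$ and $I\ge1$ gives the sublinear bound $f(I)=\tau I^{p}/(1+\kappa I^{q})\le C\,I$ for all $I\ge0$, with $C=\tau\max\{1,\kappa^{-1}\}$, so $\int_{0}^{\infty}f(I)\le C\int_{0}^{\infty}I<\infty$ and therefore $S_{\infty}:=\lim_{t\to\infty}S(t)>0$.

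For the behaviour of $I$, write $I'(t)=I(t)\,\phi(t)$ with $\phi(t):=h(I(t))\,S(t)-\lambda$ and $h(I):=\tau I^{p-1}/(1+\kappa I^{q})$; since $I>0$, the sign of $I'$ equals the sign of $\phi$, and $\phi(0)=h(I_{0})S_{0}-\lambda=\lambda(R_{0}-1)$. The crucial remark is that at any $t_{0}$ with $\phi(t_{0})=0$ one has $I'(t_{0})=0$, so in $\phi'(t_{0})=h'(I(t_{0}))I'(t_{0})S(t_{0})+h(I(t_{0}))S'(t_{0})$ the first term drops out and $\phi'(t_{0})=h(I(t_{0}))S'(t_{0})=-h(I(t_{0}))f(I(t_{0}))S(t_{0})<0$; hence $\phi$ has at most one zero on $[0,\infty)$, and there it passes from positive to negative. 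If $R_{0}<1$ then $\phi(0)<0$, so $\phi<0$ on $[0,\infty)$ and $I$ is strictly decreasing. If $R_{0}>1$ then $\phi(0)>0$; $\phi$ cannot remain positive forever, for then $I$ would be increasing with $I\ge I_{0}>0$, $f(I)$ would be bounded below by a positive constant on the (bounded) range of $I$, and $S$ would decay exponentially, contradicting $S_{\infty}>0$; hence there is a unique $t^{\ast}>0$ with $\phi>0$ on $[0,t^{\ast})$ and $\phi<0$ on $(t^{\ast},\infty)$, so $I$ increases on $[0,t^{\ast}]$ and decreases afterwards. In either case $I$ is eventually decreasing, so $I(t)\to L\ge0$; if $L>0$ the same exponential-decay argument again contradicts $S_{\infty}>0$, so $L=0$.

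The routine parts are the local and global theory and the a priori bounds. The two steps carrying the real content are the strict positivity $S_{\infty}>0$, which forces one to combine $\int_{0}^{\infty}I<\infty$ with the sublinear incidence bound (the only place $p\le q+1$ is needed), and the precise ``increase-then-decrease'' shape of $I$ when $R_{0}>1$; the latter is clean once one notices that $\phi$ can vanish at most once, the remaining subtlety being to exclude the degenerate scenario in which $\phi$ (equivalently $I$) never turns around, for which $S_{\infty}>0$ is invoked a second time.
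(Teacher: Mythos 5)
Your proof is correct and follows essentially the same route as the paper's: the conservation law $(S+I)'=-\lambda I$ gives $\int_0^\infty I\,dt<\infty$, the sublinear bound $\tau I^p/(1+\kappa I^q)\le \tau\max(1,1/\kappa)\,I$ (the only place $1\le p\le q+1$ enters) gives $S_\infty>0$, and a sign computation at critical points of $I$ yields the at-most-one-peak structure --- your observation that $\phi'(t_0)=h(I(t_0))S'(t_0)<0$ at any zero of $\phi$ is the paper's $I''(\bar t)=\tau I(\bar t)^pS'(\bar t)/(1+\kappa I(\bar t)^q)<0$ in disguise, since $I''=I\phi'$ at such a point. The one genuine (minor) divergence is how $I\to 0$ is obtained: the paper deduces it directly from $\int_0^\infty I\,dt<\infty$ together with boundedness of $I'$, whereas you first extract eventual monotonicity from the sign analysis of $\phi$ and then exclude a positive limit (and the never-turning-around scenario when $R_0>1$) by contradiction with $S_\infty>0$; both arguments are sound.
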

\begin{remark}
 For the spatially independent case, the graph of $I(t)$ can have at most one peak.
\end{remark}
\begin{proof}
Add (ODE.1) to (ODE.2) and integrate it over $(0, t)$ to obtain
\begin{equation}
\label{eqA1}
0 \leq S(t) + I(t) + \lambda \int_0^t I(t) dt  = S_0 + I_0.
\end{equation}
The existence of a unique nonnegative solution on $[0,\infty)$ follows from standard theory. 
Since $S^{\prime}(t) \leq 0$, $S(t)$  converges to a limt $S_{\infty} \geq 0$.
Also, $S(t)$  and $I(t)$ are bounded on $[0,\infty)$, $I^{\prime}(t)$ is bounded on $[0,\infty)$, and
$$\int_0^{\infty} I(t) dt  < \infty,$$
which implies that $\lim_{t \rightarrow \infty} I(t) = 0$.

Noticing $1\le p\le q+1$, a simple calculation shows that
$$\frac{z^p}{1 + \kappa \, z^q}  \leq max(1,1/\kappa) z , \, z \geq 0.$$
Thus,
\begin{equation}
\label{eqA2}
\int_0^{\infty}  \frac{ \tau \, I(t)^p}{1+ \kappa \, I(t)^q} dt \leq \tau  max(1,1/\kappa)\,\int_0^{\infty} I(t) dt \, <  \, \infty.
\end{equation}
Divide both sides of (ODE.1) by $S(t)$ and integrate it over $(0, t)$ to obtain 
$$ log\bigg(\frac{S(t)}{S_0} \bigg)= - \int_0^t \frac{ \tau \, I(s)^p}{1+ \kappa \,  I(t)^q} ds$$
which implies 
$$S_{\infty}  = S_0 \, Exp\bigg(- \int_0^{\infty}  \frac{ \tau \, I(t)^p}{1+ \kappa \, I(t)^q} dt \bigg) \ne 0.$$

Then to show that $I(t)$ can have at most one peak, observe from (ODE.2) 
$$I^{\prime \prime}(t) = \Bigg( \bigg(1+\kappa I(t)^q \bigg) \bigg(\tau p I(t)^{p-1} I^{\prime}(t) S(t) + \tau I(t)^p S^{\prime}(t) \bigg)$$
$$ - \, \, \bigg(\tau I(t)^p S(t) \bigg) \bigg( \kappa q I(t)^{q-1} I^{\prime}(t) \bigg) \Bigg) \Bigg/  \bigg(1 + \kappa I(t)^q \bigg)^2-  \lambda I^{\prime}(t).$$
If $I^{\prime}(\bar{t}) = 0$, then 
$$I^{\prime \prime}(\bar{t}) =\frac{\tau I(\bar{t})^p S^{\prime}(\bar{t})}
{1+\kappa I(\bar{t})^q} < 0,$$
which implies $I(t)$ is concave down wherever $I^{\prime}(\bar{t}) = 0$.

Rewrite (ODE.2) as 
$$
I'(t)=\lambda \left(\frac{ \tau \, I(t)^{p-1}S(t)}{1+ \kappa \,  I(t)^q} - 1\right)I(t).
$$
Then we can see that $I(t)$ decreases at $t=0$ if $R_0<1$ and increases at $t=0$ if $R_0>1$. So the claim on $I(t)$ follows from the fact that $I(t)$ converges to zero and has at most one peak. 
\end{proof}

\begin{remark}
 If $ p=1$ and $\kappa = 0$, one can combine \eqref{eqA1} and \eqref{eqA2} to obtain
$$S_{\infty}  +  \frac{\lambda}{\tau} log\bigg(\frac{S_{\infty}}{S_0}\bigg) = S_0 + I_0.$$
\end{remark}

\begin{theorem}\label{pde}
Let $\Omega$ be a bounded domain in $R^n$ with smooth boundary $\partial \Omega$. 
Let $\alpha, \tau, \kappa, p, q$ be positive constants with  $1 \leq p \leq q+1$, 
let $\lambda   \in C_+(\overline{\Omega})$ with $\lambda({\bf x}) \geq \lambda_0 > 0$ for all ${\bf x} \in \Omega$, and let 
$S_0,  \,I_0  \in L_+^1(\overline{\Omega})$ be nontrivial.

Then there exists unique 
$S(t,\cdot), I(t, \cdot): [0,\infty)\rightarrow L_{+}^{1}(\overline{\Omega})$ satisfying
\begin{eqnarray}
\frac{\partial}{\partial t} S(t,{\bf x})  &=&  
  - \frac{ \tau \,  I(t,{\bf x})^p}{1+ \kappa \,  I(t,{\bf x})^q} S(t,{\bf x}), \, {\bf x} \in \Omega, \, t >0 \hspace{2in}  (PDE.1) \nonumber \\  
\frac{\partial}{\partial t} I(t,{\bf x})  &=& \alpha \Delta I(t,{\bf x})
  + \frac{ \tau \,  I(t,{\bf x})^p}{1+ \kappa \,  I(t,{\bf x})^q}  S(t,{\bf x}) - \lambda({\bf x}) I(t,{\bf x}), \, {\bf x} \in  \Omega, \, t >0 \hspace{.25in} (PDE.2) \nonumber \\  
\frac{\partial}{\partial \eta} I(t,{\bf x}) &=& 0, \, {\bf x} \in \partial \Omega, \, t >0 \hspace{3.3in} (PDE.3) \nonumber \\  
S(0,{\bf x}) &=& S_0({\bf x}),\, \, I(0,{\bf x}) = I_0({\bf x}), \, {\bf x} \in  \Omega. \hspace{2.32in} (PDE.4) \nonumber 
\end{eqnarray}
Further, $\lim_{t \rightarrow \infty}S(t,\cdot) = S_{\infty}(\cdot) \geq 0, \, \lim_{t \rightarrow \infty}I(t,\cdot) = 0$ in $L^1(\Omega)$, and  $S_{\infty}(\cdot)  \ne 0$.
\end{theorem}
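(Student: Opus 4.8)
The plan is to reduce the system to a single equation for $I$, solve it for bounded initial data by a contraction argument backed by maximum‑principle a priori bounds, and then recover general $L^1$ data by approximation. Since the $S$‑equation is a pointwise linear ODE, for any given $I\ge 0$ it is solved explicitly by
$$S[I](t,{\bf x}) = S_0({\bf x})\exp\!\left(-\int_0^t f(I(s,{\bf x}))\,ds\right),\qquad f(z):=\frac{\tau z^p}{1+\kappa z^q},$$
so that automatically $0\le S[I](t,\cdot)\le S_0$ and $\partial_t S[I]=-f(I)S[I]$. Inserting this into (PDE.2) turns the problem into the scalar reaction--diffusion equation $\partial_t I=\alpha\Delta I+f(I)S[I]-\lambda({\bf x})I$ with Neumann data, which I treat in mild form via the Neumann heat semigroup $T(t)=e^{t\alpha\Delta}$ on $L^1(\Omega)$ (a positive $L^1$‑contraction semigroup mapping $L^1$ into $L^\infty$ for $t>0$). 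The elementary inequality $f(z)\le \tau\max(1,1/\kappa)\,z=:C_1 z$ for $z\ge 0$ — valid precisely because $1\le p\le q+1$, exactly as in the proof of Theorem \ref{ode} — is the workhorse throughout.

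For bounded data, $S_0,I_0\in L^\infty_+(\overline\Omega)$, local existence and uniqueness of a nonnegative mild solution follow from a Banach fixed point in $C([0,T];L^\infty(\Omega))$, using the local Lipschitz continuity of $f$ and the bound $S[I]\le S_0$; nonnegativity of $I$ comes from positivity of $T(t)$ and of the source. Two a priori estimates make the solution global. Integrating (PDE.1)--(PDE.2) over $\Omega$ and using $\int_\Omega\Delta I\,d{\bf x}=0$ (Neumann) gives
$$\frac{d}{dt}\int_\Omega\big(S(t)+I(t)\big)\,d{\bf x}=-\int_\Omega\lambda({\bf x})I(t)\,d{\bf x}\le 0,$$
hence $\|S(t)\|_{1}+\|I(t)\|_{1}+\lambda_0\int_0^t\|I(s)\|_{1}\,ds\le\|S_0\|_{1}+\|I_0\|_{1}=:M$; and comparison with the spatially homogeneous supersolution $\bar I(t)=\|I_0\|_\infty e^{C_1\|S_0\|_\infty t}$ (legitimate since $f(I)S[I]-\lambda I\le C_1\|S_0\|_\infty I$ and $S[I]\le S_0$) yields $\|I(t)\|_\infty\le\bar I(t)$, ruling out finite‑time blow‑up.

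For general nontrivial $S_0,I_0\in L^1_+(\overline\Omega)$, approximate by $S_0^n\uparrow S_0$, $I_0^n\uparrow I_0$ with $S_0^n,I_0^n\in L^\infty_+$, and let $(S^n,I^n)$ be the bounded‑data solutions; their $L^1$ bound $M_n\le M$ is uniform. The further uniform estimate $\int_0^T\!\int_\Omega f(I^n)S^n\,d{\bf x}\,ds=\int_\Omega S_0^n-\int_\Omega S^n(T)\le\|S_0\|_{1}$ (from $f(I)S=-\partial_t S$) controls the source of the $I^n$‑equation in $L^1(\Omega\times(0,T))$; together with the $L^1$‑contractivity and analyticity of $T(t)$, a standard compactness argument gives convergence of $I^n$ in $C([0,T];L^1(\Omega))$ and a.e., whence $S^n\to S[I]$ by the explicit formula and dominated convergence, and $(S,I)$ is a mild solution with data $(S_0,I_0)$. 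Uniqueness is obtained by a Gronwall estimate for the $I$‑equation, exploiting the instantaneous $L^\infty$‑regularization for $t>0$ to control the difference of the nonlinear terms. This passage from bounded to $L^1$ data is the step I expect to be the main obstacle: the coupling term is the product $f(I)S$ and $S$ inherits only $L^1$ control from $S_0$, so one cannot run the fixed‑point/Lipschitz machinery directly in $L^1$; it is the time‑integrated bound above — not any monotonicity of $f$, which fails in general when $p<q$ — that makes the limit go through.

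Finally, the asymptotics. As $\partial_t S\le 0$, $S(t,{\bf x})$ decreases pointwise to some $S_\infty({\bf x})\ge 0$, and $S(t,\cdot)\to S_\infty$ in $L^1$ by dominated convergence ($0\le S(t)\le S_0\in L^1$). The bound $M$ gives $\int_0^\infty\|I(s)\|_{1}\,ds\le M/\lambda_0<\infty$; combined with $\frac{d}{dt}\|I(t)\|_{1}=\int_\Omega f(I)S\,d{\bf x}-\int_\Omega\lambda I\,d{\bf x}$ and $\int_0^\infty\!\big(\int_\Omega f(I)S+\int_\Omega\lambda I\big)dt\le\|S_0\|_{1}+\lambda_{\max}M/\lambda_0<\infty$, the map $t\mapsto\|I(t)\|_{1}$ has integrable derivative, hence converges, and integrability forces the limit to be $0$; thus $I(t,\cdot)\to0$ in $L^1$. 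For $S_\infty\not\equiv 0$: from $f(z)\le C_1 z$, $\int_0^\infty\!\int_\Omega f(I)\,d{\bf x}\,dt\le C_1\int_0^\infty\|I(s)\|_{1}\,ds<\infty$, so by Fubini $\int_0^\infty f(I(s,{\bf x}))\,ds<\infty$ for a.e.\ ${\bf x}$, and therefore $S_\infty({\bf x})=S_0({\bf x})\exp(-\int_0^\infty f(I(s,{\bf x}))\,ds)>0$ at a.e.\ point where $S_0({\bf x})>0$; since $S_0$ is nontrivial this set has positive measure, so $S_\infty\not\equiv 0$.
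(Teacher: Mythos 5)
Your proposal is correct in outline and reaches all the stated conclusions, but it takes a genuinely different route from the paper in two places. For existence and uniqueness the paper simply invokes ``standard theory'' after establishing the integral bounds; you instead give a constructive scheme: solve the $S$-equation explicitly as $S[I]=S_0\exp\left(-\int_0^t f(I)\,ds\right)$, reduce to a scalar mild equation for $I$, run a fixed point in $L^\infty$ for bounded data, and pass to $L^1$ data by approximation using the time-integrated bound $\int_0^T\int_\Omega f(I)S\,d{\bf x}\,ds\le\|S_0\|_{L^1}$. This is more informative, though the two steps you compress --- full (not merely subsequential) convergence of the approximating sequence, and the Gronwall uniqueness argument via ``instantaneous $L^\infty$-regularization'' of $I$, which is delicate when the source is only $L^1$ and $n\ge 2$ --- are exactly where the real work lies; to be fair, the paper does not do this work either. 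The treatment of $S$ and of $S_\infty\ne 0$ (monotone pointwise limit, dominated convergence, Fubini plus the exponential formula) coincides with the paper's. Where you genuinely diverge is the proof that $I(t,\cdot)\to 0$ in $L^1$: the paper forms the $\omega$-limit set, proves precompactness of the orbit using compactness of the Neumann heat semigroup in $L^1(\Omega)$, and applies the LaSalle invariance principle to the Lyapunov functional $V=\int_\Omega(S+I)\,d{\bf x}$; you avoid all of this by observing that $\frac{d}{dt}\|I(t)\|_{L^1}=\int_\Omega f(I)S\,d{\bf x}-\int_\Omega\lambda I\,d{\bf x}$ is the difference of two functions each integrable over $(0,\infty)$, so $\|I(t)\|_{L^1}$ converges, and $\int_0^\infty\|I(t)\|_{L^1}\,dt<\infty$ forces the limit to be zero. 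Your argument is more elementary and self-contained, requiring no compactness or dynamical-systems machinery, whereas the paper's invariance-principle route is the one that would survive in settings where the simple mass identity is not available.
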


\begin{proof}
Add (PDE.1) and (PDE.2) and integrate it  over $(0, t)$ and $\Omega$ to obtain
$$\iint_{\Omega}(S(t,{\bf x}) + I(t,{\bf x})) d{\bf x} + \iint_{\Omega} \bigg( \int_0^t \lambda({\bf x}) I(s,{\bf x}) ds \bigg) d{\bf x} =
\iint_{\Omega}(S_0({\bf x}) + I_0({\bf x}))d{\bf x},$$
which implies
$$
 \iint_{\Omega} \bigg( \int_0^\infty \lambda({\bf x}) I(t,{\bf x}) dt \bigg) d{\bf x} <\infty.
$$

As in the ODE case, since $\lambda({\bf x}) \geq \lambda_0 > 0$,
$$\iint_{\Omega} \bigg(\int_0^t \frac{ \tau \, I(t,{\bf x})^p}{1+ \kappa \, I(t,{\bf x})^q} dt \bigg) d{\bf x} \leq \tau max(1,1/\kappa)
\iint_{\Omega}  \bigg(\int_0^t I(t,x) dt \bigg)dx <  \infty.$$
The existence of a unique nonnegative solution in $L^1(\Omega)$ on $[0,\infty)$ follows from standard theory. 
As in the ODE case, (PDE.1) implies that for a.e. ${\bf x} \in \Omega, \,  \frac{\partial}{\partial t}S(t,{\bf x}) \leq 0$ and 
$\lim_{t \rightarrow \infty}S(t,{\bf x}) = S_{\infty}({\bf x}) \geq 0$.
By the Lebesgue Theorem $\lim_{t \rightarrow \infty}S(t,\cdot) = S_{\infty}(\cdot)$ in $L^1(\Omega)$.
Integrate (PDE.1) over $t$ to obtain for a.e.  ${\bf x} \in \Omega$,
$$ log\bigg(\frac{S(t,{\bf x})}{S_0({\bf x})} \bigg)= - \int_0^t \frac{ \tau \, I(s,{\bf x})^p}{1+ \kappa \,  I(s,{\bf x})^q} ds.$$
Then
$$\iint_{\Omega} \bigg(\int_0^{\infty} \frac{ \tau \, I(t,{\bf x})^p}{1+ \kappa \, I(t,{\bf x})^q} dt \bigg) dx< \infty \Rightarrow 
\int_0^{\infty}  \frac{ \tau \, I(t,{\bf x})^p}{1+ \kappa \, I(t,{\bf x})^q} dt < \infty \mbox{ a.e. } {\bf x} \in \Omega.$$
$$\mbox{Thus,} \, S_{\infty} \ne 0, \, \mbox{since} \, S_{\infty}({\bf x}) = S_0({\bf x}) \, Exp\bigg(- \int_0^{\infty}  \frac{ \tau \, I(t,{\bf x})^p}{1+ \kappa \, I(t,{\bf x})^q} dt \bigg) \mbox{ for a.e. } {\bf x} \in \Omega. \hspace{1.0in} $$

For $S_0,I_0 \in L_+^1(\Omega)$, define the $\omega$-limiting set of $(S_0,I_0)$  in $[L^1(\Omega)]^2$ as  
$$\{(u, v)\in [L^1(\Omega)]^2: (S(t_n,\cdot),I(t_n,\cdot))\rightarrow (u, v)  \text{ in }  [L^1(\Omega)]^2 \text{ for some } \{t_n\}\}.$$
The $\omega$-limiting set of $(S_0,I_0)$ is bounded in $L^1(\Omega)$.
Since $S(t,\cdot)$ is convergent in $L^1({\Omega})$,
$\{S(t,\cdot) : t \geq 0 \}$ is compact in $L^1(\Omega)$.
Since the linear operator semigroup  generated by the Laplacian with Neumann boundary conditions is 
compact  in $L^1(\Omega)$, the nonlinear term in (PDE.2) is bounded in $t$, and since $\lambda_0 > 0$, $\{I(t,\cdot) : t \geq 0 \}$  has compact closure in $L^1(\Omega)$.
Thus, the $\omega$-limiting set of $(S_0,I_0)$ is non-empty in $L^1(\Omega)$.

To prove $\lim_{t \rightarrow \infty} I(t) = 0$, define 
$V(S,I)(t) = \iint_{\Omega}(S(t,{\bf x}) + I(t,{\bf x})) d{\bf x}$ and add (PDE.1) and (PDE.2) to obtain
$$\dot{V}(S,I)(t) = - \iint_{\Omega} \bigg( \int_0^t \lambda({\bf x}) I(t,{\bf x}) \bigg) d{\bf x} \leq 0.$$
By the Invariance Principle $(S(t), I(t))$ converges to $(S_{\infty},0)$ in $L^1(\Omega)$, 
since the maximal invariant subset of $\{ (S,I): \dot{V}(S,I)= 0\}$ in $L^1(\Omega)$ is $(S_{\infty},0). \square$
\end{proof}

\end{appendices}

\end{document}